\newcommand{\edge}[2]{#1 #2}
\begin{document}

\title{On Optimal Beyond-Planar Graphs
\thanks{Supported by the Deutsche Forschungsgemeinschaft (DFG), grant
    Br835/20-1.}}
\titlerunning{ }
\author{Franz J. Brandenburg}
\authorrunning{Franz J. Brandenburg}
\institute{University of Passau, 94030 Passau, Germany \\
  \email{brandenb@informatik.uni-passau.de}}

\maketitle

\begin{abstract}
A graph is \emph{beyond-planar} if it can be drawn in the plane with
a specific restriction on crossings. Several types of beyond-planar
graphs have been investigated, such as $k$-planar if every edge is
crossed at most $k$ times and RAC if edges can cross only at a right
angle in a straight-line drawing. A graph is \emph{optimal}  if the
number of edges coincides with the density for its  type. Optimal
graphs are special and  are known only for some   types of
beyond-planar graphs, including 1-planar, 2-planar, and  RAC graphs.

For all types of beyond-planar graphs for which optimal graphs are
known, we compute the range for optimal graphs, establish
combinatorial properties, and show that every graph is a topological
minor of an optimal graph. Note that the minor property is
well-known  for general beyond-planar graphs.
%
\end{abstract}



\section{Introduction}
Graphs are often defined by particular properties of a drawing. The
planar graphs, in which  edge crossings are excluded, are the most
prominent example. Every $n$-vertex planar graph  has at most $3n-6$
edges. The bound is tight for triangulated planar graphs, which thus
are the \emph{optimal} planar graphs.   The planar graphs have been
characterized by the forbidden   minors $K_5$ and $K_{3,3}$
\cite{d-gt-00}. The minors cannot be a subdivision of a planar graph
\cite{K-cgdp-30}, that is a \emph{topological minor},  nor obtained
by edge contraction \cite{w-minor-37}.
  It is well-known that every
topological minor is a minor, but not conversely, see
\cite{d-gt-00}. In fact, the complete graph   $K_5$ cannot be a
topological minor of any graph of degree at most three.

There has been  recent interest in 
\emph{beyond-planar graphs} \cite{dlm-survey-beyond-19,
ht-beyond-book-20,klm-bib-17}, which are   defined by drawings with
specific restrictions on crossings. These graphs are a natural
generalization of the planar graphs.  Their study in graph theory,
graph algorithms, graph drawing, and computational geometry can
provide significant insights for the design of effective methods to
visualize real-world networks, which are non-planar, in general.

A graph is $k$-\emph{planar} \cite{pt-gdfce-97} if it has a drawing
in the plane so that each edge is   crossed by at most $k$ edges. In
particular, it is 1-planar if each edge is crossed at most
once~\cite{ringel-65}. A 1-planar drawing is   IC-\emph{planar}
(independent crossing) if every vertex is incident to at most one
crossed edge \cite{a-cnircn-08}, and NIC-\emph{planar}
(near-independent crossing) if two pairs of crossing edges share at
most one vertex \cite{z-dcmgprc-14}. A drawing  is
\emph{fan-crossing free} \cite{cpkk-fan-15} if each edge is only
crossed by independent edges, i.e., they have distinct vertices,
 and \emph{fan-crossing} \cite{b-fan-20}  if the crossing edges have a common
 vertex, i.e., they form a fan. For 1-\emph{fan-bundle graphs}
 \cite{abkks-fanbundle-18},
 edges incident to a vertex are first bundled and a  bundle can be
 crossed at most once by another bundle. An edge can only be bundled
 at one of its  vertices, that is on one side.
  These properties are topological. They hold
for \emph{embeddings}, which are equivalence classes of
topologically equivalent drawings.
 Right angle crossing  (RAC) is  a geometric property,
 in which the edges are drawn straight-line
and may cross at a right angle \cite{del-dgrac-11}. A $k$-bend RAC
drawing is a polyline drawing so that every edge is drawn with at
most $k$ bends and there is a right angle at crossings. For more
types of beyond-planar graphs, such as quasi-planar
\cite{aapps-qpg-97}, 1-gap-planar \cite{bbc-1gap-18}, fan-planar
\cite{ku-dfang-14}, and $k$-bend RAC graphs \cite{del-dgrac-11}, we
refer to \cite{dlm-survey-beyond-19}. Particular beyond-planar
graphs can be defined by first order logic formulas \cite{b-FOL-18}
and in terms of an avoidance of (natural and radial) grids
\cite{afps-grids-14}.

Beyond-planar graphs  have been studied with different intensity and
depth.   In particular, the \emph{density}, which is an upper bound
on the number of edges of $n$-vertex graphs, the size of the largest
complete (bipartite) graph \cite{abks-beyond-Kn-19}, and inclusion
relations have been investigated \cite{dlm-survey-beyond-19,
ht-beyond-book-20}. The linear density is a typical property of
beyond-planar graphs, see Table~\ref{table}. Small complete graphs
$K_k$ with $k \leq 11$ distinguish some types, see
\cite{abks-beyond-Kn-19, b-fcf-21}.
 Inclusions are canonical, in general, so
that a   restriction on drawings implies a proper inclusion for the
graph classes \cite{dlm-survey-beyond-19}. In particular, every RAC
drawing is both fan-crossing free and quasi-planar. There are mutual
incomparabilities, e.g., between RAC, fan-crossing
    and 2-planar  graphs
\cite{b-fcf-21}. Also RAC graphs are incomparable with each of
1-planar graphs \cite{el-racg1p-13}, NIC-planar graphs
\cite{bbhnr-NIC-17} and $k$-planar graphs for every fixed $k$
\cite{b-fcf-21}. However,
 every IC-planar graph is both 1-planar and RAC
\cite{bdeklm-IC-16}.

The situation is simpler for optimal graphs. A graph 
is \emph{optimal}    if its number of edges meets the established
bound on the density of graphs of its type. Hence, the density is
tight for values $n$ for which there are optimal $n$-vertex graphs.
The term optimal has been introduced by Bodendiek et
al.~\cite{bsw-bs-83} who have studied optimal 1-planar graphs
\cite{bsw-1og-84}. At other places \emph{extreme} or \emph{maximally
dense} is used.

Optimal graphs are on top of an augmentation  of graphs by
additional edges, so that the defining property of the graphs is not
violated, that is the type is preserved. Augmented graphs can often
be handled more easily. A drawing
  of a graph is (\emph{planar-maximal}) \emph{maximal}  if no further
(uncrossed) edge can be added without violation \cite{abk-sld3c-13}.
A graph $G$ is \emph{maximal} for some type $\tau$ if $G+e$ is not a
$\tau$-graph for any edge $e$ that is added to $G$. Note that there
are \emph{densest} and \emph{sparsest} graphs, which are maximal
graphs with the maximum and minimum number of edges among all
$n$-vertex graphs in their type \cite{bbhnr-NIC-17,
begghr-odm1p-13}.

We are aware of optimal graphs  for the following types  of
beyond-planar graphs: 1-planar, 2-planar, IC-planar, NIC-planar,
1-fan-bundle and RAC graphs, see Proposition~\ref{prop1} and
Table~\ref{table}. There are more types of beyond-planar graphs with
a density of $4n-8$ and $5n-10$, respectively, namely fan-crossing
free \cite{cpkk-fan-15} and grid crossing graphs \cite{b-FOL-18} as
well as  fan-planar \cite{ku-dfang-14}, fan-crossing
\cite{b-fan-20}, 1-gap planar
  \cite{bbc-1gap-18} and 5-map graphs
\cite{b-5maps-19}, where 5-map graphs are simultaneously 2-planar
and fan-crossing (see also \cite{dlm-survey-beyond-19} for
definitions), so that there are optimal graphs for these types, too.

Optimal graphs  are yet unknown for types of beyond-planar graphs
with a density above $5n-10$. 
The known bounds of $5.5 n-11$ for 3-planar   \cite{bkr-optimal-17}
and $6.5n -20$ for quasi-planar graphs \cite{at-mneqpg-07} are tight
up to a constant. Similarly, there is a constant gap for graphs with
geometric thickness two (doubly linear) \cite{hsv-rstg-99}, and for
bar-visibility and rectangle visibility graphs \cite{hsv-rstg-99}.
There are larger gaps for $k$-planar graphs with $k \geq 4$
\cite{a-cn-19, pt-gdfce-97}, 2- and 3-bend RAC graphs
\cite{del-dgrac-11}, and graphs avoiding special grids
\cite{afps-grids-14, ppst-tgnlg-05}.

 In general, the recognition of beyond-planar graphs is
NP-complete  \cite{dlm-survey-beyond-19}. However, optimal 1-planar
\cite{b-ro1plt-18}, optimal 2-planar \cite{fkr-o2p-21}, and optimal
NIC-planar graphs \cite{bbhnr-NIC-17} can be recognized in linear
time and optimal IC-planar graphs in cubic time \cite{b-IC+NIC-18}.
The recognition problem for optimal RAC graphs and optimal
1-fan-bundle graphs is open. Since an optimal RAC graph is 1-planar
and triangulated, the pairs of crossing edges can be computed in
cubic time \cite{b-4mapGraphs-19}, so that it remains to determine
whether or not all pairs of crossing edges can cross at a right
angle in a straight-line drawing.\\


\textbf{Our contribution:} In this paper, we consider optimal graph
of the aforementioned types $\tau$ of beyond-planar graphs. We study
combinatorial properties and compute the range for optimal graphs.
We show that every graph has a subdivision that is an optimal
$\tau$-graph, whereas there are optimal $\tau$-graphs that contain
$K_{42}$ as a minor but not as a topological minor.

The paper is organized as follows:  We introduce basic notions on
beyond-planar graphs in the next section and recall some properties
of such graphs. We study combinatorial properties of optimal graphs
in Section~\ref{sect:comb} and minors in  Section~\ref{sect:minor},
and we conclude in Section~\ref{sect:conclusion}.

\section{Preliminaries} \label{sect:prelim}

 We consider graphs
 that are \emph{simple} both in a graph theoretic
and in a topological sense. Thus there are no multi-edges or loops,
adjacent edges do not cross, and two edges cross at most once in a
drawing. A graph $G=(V,E)$ consists of a set of $n$ vertices and $m$
edges. We assume that it is defined by a drawing $\Gamma(G)$ in the
plane. The \emph{planar skeleton}  of $G$ (or $\Gamma(G)$) is the
subgraph induced by the uncrossed edges.

A crossed quadrangle,   X-\emph{quadrangle} for short, is a planar
quadrangle with a pair of crossing edges  in its interior. There are
no other vertices or edges in the interior, as opposed to
\cite{begghr-odm1p-13}. At several other places, the term kite has
been used. Similarly, an X-\emph{pentagon} consists of a pentagon of
five uncrossed edges and only a pentagram of five crossing edges in
its interior.  These drawings of $K_4$ and $K_5$ have been used for
optimal 1- and 2-planar graphs \cite{bkr-optimal-17, bsw-1og-84,
pt-gdfce-97}. We say that a vertex (edge) \emph{is in a triangle} 
 if it is in the boundary, and  in an X-quadrangle if is a
part of the X-quadrangle.

We consider beyond-planar graphs of \emph{type} $\tau$, where $\tau$
ranges over   the set of beyond-planar graphs for which optimal
graphs are known: IC-planar, NIC-planar, 1-planar, 2-planar,
1-fan-bundle and right angle crossing (RAC) graphs, as well as
fan-crossing free, grid crossing,
 fan-crossing,   fan-planar, 1-gap-planar  and 5-map
graphs.

For convenience, we do not distinguish between a graph, a drawing or
an embedding, and we assume that a graph and its
drawing or embedding are always of the same type.\\

Next we summarize related work on beyond-planar graphs of type
$\tau$.

\begin{proposition} \label{prop1}

\begin{enumerate} [(i)]
\item An $n$-vertex graph is optimal 1-planar if it has $4n-8$ edges
\cite{bsw-1og-84}.
 A graph $G$ is optimal 1-planar  if and only if the planar
skeleton is a 3-connected quadrangulation \cite{s-o1pgts-10}, so
that $G$ is obtained by inserting a pair of crossing edges in each
quadrangle. There are optimal 1-planar graphs if and only if $n=8$
and $n \geq 10$ \cite{bsw-1og-84}. The number of optimal 1-planar
graphs is known for $n \leq 36$ \cite{bggmtw-gsqs-05}. Optimal
1-planar graphs have a unique embedding, except for extended wheel
graphs $XW_{2k}, k \geq 3$, which have two embeddings for $k \geq 4$
\cite{s-s1pg-86}
  and eight for $k=3$ \cite{s-rm1pg-10}. The embeddings are unique up to graph
isomorphism \cite{s-rm1pg-10}. Optimal 1-planar graphs can be
recognized in linear time \cite{b-ro1plt-18}.
\item Every
optimal fan-crossing free graph is  1-planar \cite{cpkk-fan-15}, and
thus optimal 1-planar.
\item An $n$-vertex graph is optimal IC-planar if it has $\frac{13}{4}n-6$ edges
\cite{zl-spgic-13}. There are optimal IC-planar  graphs if and only
if $n=4k$ and $k \geq 2$ \cite{zl-spgic-13}.
 Optimal IC-planar graphs can be recognized in cubic
time \cite{b-IC+NIC-18}.
\item An $n$-vertex graph is optimal NIC-planar if it has $\frac{18}{5}(n-2)$ edges
\cite{z-dcmgprc-14}. There are optimal NIC-planar graphs if and only
if n=5k+2 for $k \geq 2$ \cite{bbhnr-NIC-17,cs-tc1pg-14}. Optimal
NIC-planar graphs have a unique NIC-planar embedding and
  can be recognized in
linear time \cite{bbhnr-NIC-17}.
\item An $n$-vertex graph is optimal 2-planar if it has $5n-10$
edges. A graph $G$ is optimal 2-planar  if and only if the planar
skeleton is a 3-connected pentangulation \cite{bkr-optimal-17}, so
that $G$ is obtained by adding a pentagram of crossed edges in each
pentagon. Every optimal 2-planar graph is an optimal fan-crossing
\cite{b-fan+fcf-18}, fan-planar \cite{ku-dfang-14}, 1-gap
planar\cite{bbc-1gap-18}, and 5-map graph \cite{b-5maps-19}. There
are optimal 2-planar graphs for 
every $n \geq 50$ with $n=2$ mod 3 \cite{pt-gdfce-97}. Optimal
2-planar graphs can be recognized in linear time \cite{fkr-o2p-21}.
\item An $n$-vertex graph is optimal 1-fan-bundle  if it has $\frac{13}{3}(n-2)$ edges
\cite{abkks-fanbundle-18}. Every optimal 1-fan bundle graph consists
of a planar pentangulation, so that four crossing edges are inserted
in each pentagon (without creating a multi-edge). There are optimal
1-fan bundle graphs if $n=2$ mod 3 for properly chosen values of $n$
\cite{abkks-fanbundle-18}.
\item  An $n$-vertex graph is optimal RAC if it has $4n-10$ edges \cite{del-dgrac-11}.
Every optimal RAC graph is 1-planar \cite{el-racg1p-13}. The outer
face of a drawing is a triangle.  There are optimal RAC graphs if
$n=3k-5, k \geq 2$ \cite{del-dgrac-11}.

\end{enumerate}
\end{proposition}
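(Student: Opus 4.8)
The plan is to assemble Proposition~\ref{prop1} from the cited literature: every item is an already-published fact, so the work consists of recalling the relevant proof techniques and checking that the pieces are mutually consistent. The structural backbone common to all items is the planar skeleton of an optimal drawing, so I would organize the argument around it.

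For the density bounds I would use one uniform tool: Euler's formula applied to the planar skeleton together with a bound on the number of pairs of crossing edges. For $1$-planar graphs one deletes one edge from each of the pairwise edge-disjoint crossing pairs to obtain a simple planar graph, and combines the resulting bound with the bound on the number of crossings to get $m \le 4n-8$ \cite{bsw-1og-84}; equality forces the skeleton to be a quadrangulation with a kite in every face. Replacing quadrangles and kites by pentagons and pentagrams yields $5n-10$ for $2$-planar graphs \cite{bkr-optimal-17,pt-gdfce-97}. For IC-, NIC-planar and $1$-fan-bundle graphs the same counting is sharpened by the extra restriction on how two crossing pairs may share a vertex, which couples the number of crossings to the face vector of the skeleton; optimizing then gives $\tfrac{13}{4}n-6$ \cite{zl-spgic-13}, $\tfrac{18}{5}(n-2)$ \cite{z-dcmgprc-14}, and $\tfrac{13}{3}(n-2)$ \cite{abkks-fanbundle-18}. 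For RAC graphs I would split the edges into the uncrossed ones (which form a planar graph) and the crossed ones (which, via a charging argument exploiting the right angles, also form a planar graph), obtaining $4n-10$ \cite{del-dgrac-11}; that an optimal RAC graph is moreover $1$-planar with a triangular outer face is taken from \cite{el-racg1p-13}.

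For the structural characterizations the key point is that an optimal drawing is maximal, so no face of the planar skeleton can be left without its crossing configuration, as otherwise an uncrossed edge could be inserted; a short argument excluding small separators then shows the skeleton is $3$-connected and quadrangular in the $1$-planar case \cite{s-o1pgts-10} and $3$-connected and pentangular in the $2$-planar case \cite{bkr-optimal-17}, with the graph recovered by filling each face by a kite, respectively a pentagram. Item (ii) is immediate once one knows an optimal fan-crossing free graph has $4n-8$ edges and a $1$-planar drawing \cite{cpkk-fan-15}; the equivalences in (v) with optimal fan-crossing, fan-planar, $1$-gap planar and $5$-map graphs hold because all of these classes have density $5n-10$ and contain the pentangulation-plus-pentagram graphs \cite{b-fan+fcf-18,ku-dfang-14,bbc-1gap-18,b-5maps-19}. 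For the existence ranges I would exhibit explicit families obtained by gluing copies of the basic optimal gadget -- extended wheel graphs $XW_{2k}$ for $1$-planar graphs, pentangulation-based blocks for $2$-planar and $1$-fan-bundle graphs, and the $4$- and $5$-vertex blocks for IC- and NIC-planar graphs -- and check that the edge count stays on the density line; the congruences $n=4k$, $n=5k+2$, $n=3k-5$, and $n \equiv 2 \pmod 3$ are then forced by integrality of the respective density formula, and the finitely many small exceptions (such as no optimal $1$-planar graph for $n=9$ or for $n \le 7$) are settled by direct inspection \cite{bsw-1og-84}. The recognition statements are quoted from \cite{b-ro1plt-18,bbhnr-NIC-17,b-IC+NIC-18,fkr-o2p-21}.

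The main obstacle, were one to insist on a self-contained argument, is the uniqueness-of-embedding claims in (i) and (iv): one has to show that the crossing pattern is essentially forced by the skeleton, with the extended wheel graphs $XW_{2k}$ as the only exceptions in the $1$-planar case, which is the technically heaviest step and is precisely what \cite{s-s1pg-86,s-rm1pg-10} establish. For the remainder of this paper only the density formulas and the skeleton characterizations are used, so we record Proposition~\ref{prop1} as a summary of prior work.
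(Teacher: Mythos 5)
Your proposal matches the paper's treatment: the paper offers no proof of Proposition~\ref{prop1} at all --- it is presented purely as a compilation of results from the cited literature, and your reconstruction (Euler-formula counting on the planar skeleton for the densities, maximality plus $3$-connectivity for the skeleton characterizations, explicit gadget families for the existence ranges, and direct quotation for uniqueness of embeddings and recognition) is exactly the standard route those references take. One small slip worth correcting: you attribute the congruences uniformly to ``integrality of the respective density formula,'' but $4n-10$ and $5n-10$ are integers for every $n$, so the condition $n=3k-5$ for RAC graphs is not a necessary consequence of anything --- it is merely the range covered by the primal--dual construction of Didimo et al.\ (indeed the paper later extends it to all $n\ge 4$) --- and the condition $n\equiv 2\pmod 3$ for optimal $2$-planar graphs comes from the face count of a pentangulation ($3f=2n-4$), not from the density. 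Also, your one-line sketch of the RAC bound (uncrossed edges planar, crossed edges planar) would only yield $6n-12$; the actual argument of Didimo--Eades--Liotta is a finer three-class partition, but since you defer to the citation this does not affect the correctness of the proposition as stated.
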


\begin{table}[t]
\centering
\begin{tabular}{ l |       c |       c}
 type & density &  range of optimal graphs\\
 \hline
 1-planar & $4n-8$ \cite{bsw-1og-84} & $n=8$ and $n \geq 10$ \cite{bsw-1og-84} \\
 IC-planar & $\frac{13}{4}n-6$ \cite{zl-spgic-13} & $n=4k, k \geq 2$ \cite{zl-spgic-13}\\
 NIC-planar & $\frac{18}{5}(n-2)$ \cite{z-dcmgprc-14} & $n=5k+2, k
 \geq 2$ \cite{bbhnr-NIC-17, cs-tc1pg-14}\\
 2-planar & $5n-10$ \cite{pt-gdfce-97} & $n=20$ and $n =3k+2, k \geq
 8$   (*)\\
 1-fan-bundle & $\frac{18}{5}(n-2)$ \cite{abkks-fanbundle-18} &
  $n = 3k+2, k\geq 2$ (*)\\
RAC & $4n-10$ \cite{del-dgrac-11} &  $n \geq 4$ (*)\\

\hline
\end{tabular}
 \caption{Some types of beyond-planar graphs, their
 density and optimal graphs. A (*) indicates an extension of the
 range in this work.
 }
  \label{table}
\end{table}

\section{Combinatorial Properties} \label{sect:comb}

We now improve some results from Proposition \ref{prop1}.

\begin{lemma} \label{lem:IC}
 An IC-planar drawing of an optimal IC-planar graph consists of
  triangles and $X$-quadrangles. Every vertex is in a triangle and in exactly one
  $X$-quadrangle. 
  $X$-quadrangles are vertex disjoint. Every uncrossed edge is in a
  triangle.
  There are optimal IC-planar graphs with exponentially many
  IC-planar embeddings.
\end{lemma}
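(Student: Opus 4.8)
The plan is to establish the structural claims by a global edge count followed by a local analysis of each crossing, and to obtain the embedding lower bound by an explicit construction whose planar skeleton is far from $3$-connected.

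\textbf{Structure.} Let $\Gamma$ be an IC-planar drawing of an optimal IC-planar graph $G$ on $n$ vertices, so $m=\frac{13}{4}n-6$, and let $c$ be the number of pairs of crossing edges. Since $\Gamma$ is IC-planar the $4c$ endpoints of crossed edges are pairwise distinct, so $c\le n/4$; deleting one edge of each crossing pair gives a simple plane graph with $m-c\le 3n-6$ edges, so $c\ge n/4$. Hence $c=n/4$, every vertex is incident to exactly one crossed edge, and the plane graph $G'$ obtained by deleting one edge of each pair has exactly $3n-6$ edges, i.e. it is a (necessarily $3$-connected) triangulation. This step is routine.

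\textbf{Local analysis.} Fix a crossing of $e$ and $f$ where $f=ac$ is the edge kept in $G'$. In the triangulation $G'$ the edge $ac$ bounds two triangular faces, and since no face of a triangulation contains a vertex in its interior, the two endpoints of $e$ must be the apices of these faces; thus $e=bd$ and the faces are $abc$ and $acd$, so $ab,bc,cd,da\in E(G')$ and bound a region $Q$ whose interior in $\Gamma$ contains only the crossing $e\times f$, i.e. $Q$ is an $X$-quadrangle. Each of its four vertices is an endpoint of one of its diagonals, hence incident to a crossed edge; as each vertex is incident to exactly one crossed edge, each vertex lies in a unique $X$-quadrangle, so the $n/4$ $X$-quadrangles are vertex-disjoint and partition $V$. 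If a boundary edge $ab$ of $Q$ were crossed it would be a diagonal of some $X$-quadrangle $Q'$, forcing $a,b\in V(Q')$ and hence $Q'=Q$, which is impossible since $ac,bd$ are the diagonals of $Q$; so every boundary edge is uncrossed, and the faces of $\Gamma$ are exactly the $X$-quadrangles together with triangles bounded by uncrossed edges. Finally, a vertex $v\in Q=abcd$ has $\deg_{G'}(v)\ge 4$: if, say, $v=a$ had degree $3$, tracing the three faces around $a$ in $G'$ forces the removed diagonal $bd$ into $E(G')$. Since only the two faces of $G'$ around the kept diagonal of $Q$ are merged in $\Gamma$, at least $\deg_{G'}(v)-2\ge 2$ faces incident to $v$ remain triangles, so $v$ lies in a triangle; the same counting shows that at most one of the two faces of $G'$ incident to an uncrossed edge is merged into an $X$-quadrangle, hence every uncrossed edge is in a triangle.

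\textbf{Exponentially many embeddings.} By the above, an IC-planar embedding of an optimal IC-planar graph $G$ is completely determined by a planar embedding of its uncrossed skeleton $S$: every quadrilateral face receives its uniquely placed pair of crossing diagonals and every other face is a triangle; conversely distinct embeddings of $S$ yield distinct embeddings of $G$, since deleting the crossed edges recovers the embedding of $S$. It therefore suffices to exhibit, for infinitely many $n\equiv 0\pmod 4$, a $2$-connected plane graph $S$ on $n$ vertices with $n/4$ quadrilateral faces and all other faces triangular, whose quadrilateral faces have pairwise disjoint vertex sets covering $V$, and which admits $2^{\Omega(n)}$ combinatorially distinct planar embeddings. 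I would build $S$ as a ``chain'' of $\Theta(n)$ constant-size $3$-connected gadgets glued along separating pairs, so that its SPQR-tree has $\Theta(n)$ R-nodes, each gadget carrying a bounded number of quadrilateral faces; reflecting the gadgets independently produces $2^{\Theta(n)}$ embeddings, all sharing the same face structure and hence each lifting to a genuine IC-planar embedding of the associated optimal graph. The main obstacle is precisely this last step: designing one such gadget, checking that the chain is realizable as the skeleton of an optimal IC-planar graph (in particular that every vertex ends up in exactly one quadrilateral face, so that $G$ really has $\frac{13}{4}n-6$ edges), and verifying via the SPQR decomposition that the reflections are mutually independent and give pairwise non-isomorphic embeddings. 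The structural part is comparatively routine bookkeeping.
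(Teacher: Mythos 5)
The structural half of your argument is correct and in fact more detailed than the paper's: the paper simply asserts that an optimal IC-planar graph has $n/4$ X-quadrangles (the surplus over $3n-6$) and that these are vertex-disjoint by IC-planarity, whereas you derive $c=n/4$ by the two-sided count ($4c\le n$ from IC-planarity, $c\ge n/4$ from planarity of the graph left after deleting one edge per crossing) and then carry out the local analysis showing each crossing sits inside an X-quadrangle whose two halves are faces of the resulting triangulation. That local step, the degree bound $\deg_{G'}(v)\ge 4$, and the face-counting that puts every vertex and every uncrossed edge in a triangle are all sound.

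The genuine gap is the last claim, the existence of optimal IC-planar graphs with exponentially many IC-planar embeddings. What you give is a plan, and you explicitly flag its crucial ingredient --- a concrete gadget whose chain has $\Theta(n)$ independent reflections while remaining the skeleton of an optimal IC-planar graph --- as an unresolved obstacle. Without that gadget nothing is proved. There is also a soft spot in your reduction: you treat ``the uncrossed skeleton $S$'' as canonical, but two IC-planar embeddings of the same abstract graph may select different pairs of crossing edges, so an embedding of $G$ is not in general determined by an embedding of one fixed skeleton (this does not hurt the lower-bound direction you need, but it should not be stated as an equivalence). The paper's route is much lighter: it exhibits a single small optimal IC-planar graph (eight vertices, Fig.~\ref{fig:IC-ambig}) that admits two distinct IC-planar embeddings --- the ambiguity there lies in the choice of which pairs of edges cross, not in reflections at separation pairs --- takes $k$ disjoint copies, and triangulates the union to restore optimality; the $k$ binary choices are independent, giving $2^{k}$ embeddings with $k=n/8$. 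If you replace your SPQR-tree scheme by one explicit two-embedding base example (or actually construct and verify your gadget), the proof is complete.
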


\begin{proof}
X-quadrangles are vertex disjoint by IC-planarity. An optimal
IC-planar graph has $\frac{n}{4}$ X-quadrangles, so that every
vertex is in an X-quadrangle. Also every uncrossed edge must be in
some triangle, since X-quadrangles do not share vertices or edges.
The graph    in Fig.~\ref{fig:IC-ambig}  is optimal IC-planar and
has two IC-planar embeddings. By taking $k$ copies  and a subsequent
triangulation for optimality, there is an IC-planar graph which has
  $2^k$ many embeddings, where $k=n/8$.
\qed
\end{proof}

\begin{figure}[t]
  \centering
    \includegraphics[scale=0.4]{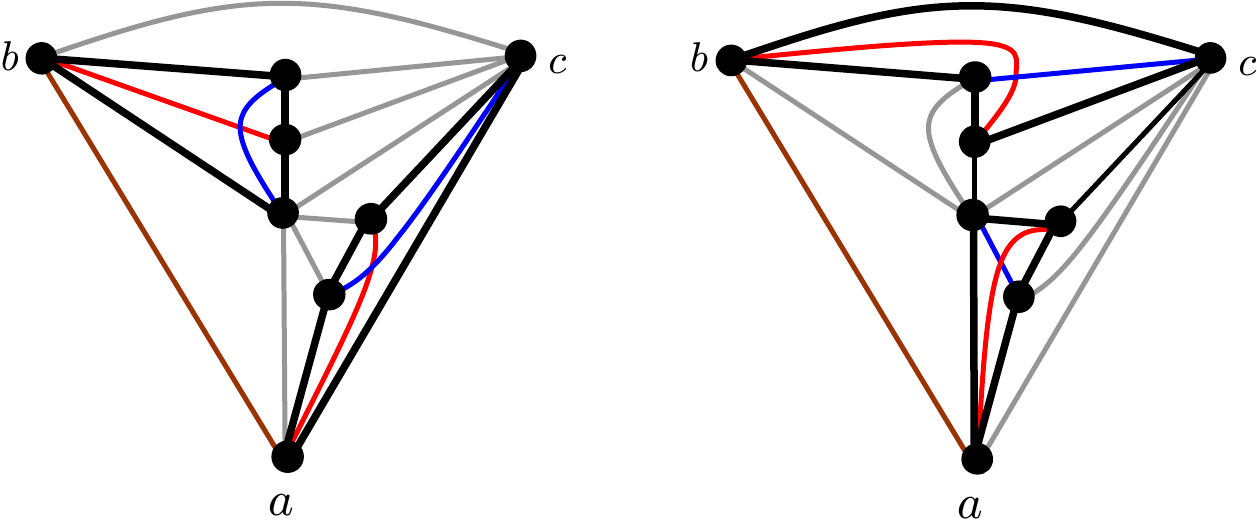}
 \caption{An IC-planar graph with two IC-planar embeddings (from
 \cite{b-IC+NIC-18}).
  }
\label{fig:IC-ambig}
\end{figure}

\begin{lemma} \label{lem:NIC}
A NIC-planar drawing of an optimal IC-planar graph consists of
  triangles and $X$-quadrangles, so that every edge is in an $X$-quadrangle
  and every uncrossed edge is in a triangle.
\end{lemma}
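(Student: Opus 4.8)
The plan is to follow the pattern of Lemma~\ref{lem:IC}, adapted to the fact that in a NIC-planar drawing two X-quadrangles may share a vertex but, by near-independence of crossings, never an edge. Let $G$ be an optimal NIC-planar graph, so $n=5k+2$ and $m=\frac{18}{5}(n-2)=18k$ by Proposition~\ref{prop1}(iv), and fix a NIC-planar drawing $\Gamma$ of $G$ with $c$ crossings. Since $G$ is simple and $1$-planar, no edge is crossed twice, so $\Gamma$ has exactly $2c$ crossed edges and $m-2c$ uncrossed edges; moreover any two crossings share at most one of their four endpoints, hence any two X-quadrangles share at most one vertex and therefore no edge.

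The core step is to show that every face of the planar skeleton is a triangle or an X-quadrangle, equivalently that $\Gamma$ consists of triangles and X-quadrangles. I would planarize $\Gamma$ by replacing each crossing with a dummy vertex, obtaining a simple plane graph $G^{\times}$ on $n+c$ vertices and $m+2c$ edges in which the dummy vertices form an independent set (an edge between two of them would be a twice-crossed edge of $G$) and each dummy vertex $v$ has its four real neighbours $a,b,c,d$ in a cyclic order for which $\{a,c\}$ and $\{b,d\}$ are the two crossed edges. Writing $f$ for the number of faces of $G^{\times}$, Euler's formula gives $f=m+c-n+2$, and $3f\le 2(m+2c)$ yields $c\ge m-3n+6=\frac{3}{5}(n-2)$, with equality exactly when $G^{\times}$ is a triangulation. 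The matching inequality $c\le\frac{3}{5}(n-2)$ is the known bound on the number of crossings of an $n$-vertex NIC-planar graph; I would import it from the density analysis of NIC-planar graphs~\cite{z-dcmgprc-14} (equivalently, from the characterization of optimal NIC-planar graphs through their generalized dual in~\cite{bbhnr-NIC-17,cs-tc1pg-14}). Hence $c=\frac{3}{5}(n-2)$ and $G^{\times}$ is a triangulation. Consequently each dummy vertex $v$ lies on exactly the four triangular faces $abv$, $bcv$, $cdv$, $dav$; their union together with the crossing is the quadrilateral $abcd$ carrying only the two diagonals $\{a,c\}$, $\{b,d\}$, i.e.\ an X-quadrangle, so every crossing sits in an X-quadrangle. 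Every face of $G^{\times}$ incident to no dummy vertex is a triangle all of whose edges are uncrossed, i.e.\ a triangular face of the planar skeleton. I expect this core step to be the main obstacle: the lower bound on $c$ is elementary and only uses $1$-planarity, whereas the matching upper bound genuinely invokes near-independence of crossings and is where the established structure of optimal NIC-planar graphs has to be used.

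The two incidence statements then follow by counting. There are $c=\frac{3}{5}(n-2)$ X-quadrangles; their $2c=\frac{6}{5}(n-2)$ diagonals are all the crossed edges, and, since distinct X-quadrangles share no edge, their boundaries contain $4c=\frac{12}{5}(n-2)$ distinct uncrossed edges. Because $m-2c=\frac{18}{5}(n-2)-\frac{6}{5}(n-2)=\frac{12}{5}(n-2)=4c$, every uncrossed edge is a boundary edge of exactly one X-quadrangle, so together with the diagonals every edge of $G$ lies in an X-quadrangle. Finally, an uncrossed edge $e$ bounds a corner triangle of its X-quadrangle on one side and a second face on the other; that face is a triangle, and it cannot be a corner triangle of another X-quadrangle, since $e$ lies on only one X-quadrangle, so it is a triangle of the planar skeleton. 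Thus $e$ is in a triangle. Since by Proposition~\ref{prop1}(iv) the NIC-planar embedding of $G$ is unique, the structure just derived is that of every NIC-planar drawing of $G$.
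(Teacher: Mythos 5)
Your proposal is correct and rests on the same core argument as the paper's proof: the optimal graph has $\frac{3}{5}(n-2)$ X-quadrangles which, being pairwise edge-disjoint by NIC-planarity, account for all $6\cdot\frac{3}{5}(n-2)=\frac{18}{5}(n-2)$ edges, forcing every edge into an X-quadrangle and a triangle on the free side of each uncrossed edge. The only difference is that you justify the X-quadrangle count and the triangle/X-quadrangle face structure explicitly, via planarization and Euler's formula plus the imported crossing bound $c\le\frac{3}{5}(n-2)$, whereas the paper simply asserts the count and points to Corollary 4 of \cite{bbhnr-NIC-17} for the same structural statement.
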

\begin{proof}
An optimal NIC-planar graph has $\frac{3}{5}(n-2)$ X-quadrangles and
$\frac{18}{5}(n-2)$ edges. Since two X-quadrangles do not share an
edge in a NIC-planar embedding, every edge is in an X-quadrangle and
every uncrossed edge is a triangle. Hence, there is an X-quadrangle
on one side of each uncrossed edge and a triangle on the other side.
Note that the claim can also be obtained from Corollary 4 in
\cite{bbhnr-NIC-17}.
\qed
\end{proof}

The full range for optimal 1-planar, IC-planar and NIC-planar graphs
has been discovered before, but only partially for  2-planar,
1-fan-bundle, and RAC graphs, as stated in Proposition~\ref{prop1}.

\begin{theorem} \label{lem:2-planar}
There are optimal 2-planar graphs exactly for $n=20$ and for every
$n \geq 26$ if $n=2$ mod  3.
\end{theorem}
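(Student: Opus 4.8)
The plan is to combine the structural characterization from Proposition~\ref{prop1}(v) — an optimal $2$-planar graph is exactly a $3$-connected planar pentangulation with an X-pentagon inserted in each face — with a counting identity and an explicit construction. First I would record the necessary conditions. If $G$ is optimal $2$-planar on $n$ vertices, its planar skeleton $S$ is a $3$-connected planar quadrangulation-analogue, namely a $3$-connected pentangulation: every face is a $5$-cycle and Euler's formula gives $m_S = \tfrac{5}{3}(n-2)$ faces... more precisely, with all faces pentagons, $2 e_S = 5 f_S$ and $n - e_S + f_S = 2$ yield $e_S = \tfrac{5}{3}(n-2)$ and $f_S = \tfrac{2}{3}(n-2)$. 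For these to be integers we need $n \equiv 2 \pmod 3$. This handles the congruence condition. Small cases then have to be excluded by hand: a $3$-connected planar pentangulation needs enough faces, and one checks that $n=5$ (the single pentagon, $f_S=2$, not $3$-connected as a pentangulation in the required sense) and $n=8,11,14,17,23$ admit no $3$-connected pentangulation, while $n=20$ does (e.g.\ the dodecahedron). The exclusion of these sporadic values is the part that needs the most care; I would argue it either by a direct enumeration / discharging argument on $3$-connected pentangulations with few faces, or by citing the known classification of such graphs, noting that the dodecahedron ($n=20$) is the smallest $3$-connected pentangulation after the degenerate pentagon.

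For the positive direction — realizing every $n=20$ and every $n\ge 26$ with $n\equiv 2\pmod 3$ — I would exhibit an explicit family of $3$-connected planar pentangulations and then invoke Proposition~\ref{prop1}(v) to insert pentagrams and obtain optimal $2$-planar graphs. The natural construction is a "pentagonal prism / antiprism"–style or a recursive substitution: start from the dodecahedron on $20$ vertices, and use a local face-replacement operation that takes one pentagonal face and replaces it by a small $3$-connected pentangulated patch adding a fixed number of vertices, while preserving $3$-connectivity and the all-faces-pentagons property. If one such gadget adds exactly $3$ vertices and another adds, say, $6$ (or one adds $3$ and we also have a $+9$ gadget), then by combining them one reaches every residue-$2$ value of $n$ from some threshold on; checking that the threshold is $26$ (so $n=23$ is the last excluded value) is then a finite verification. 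Alternatively, a cleaner route: take a long "ladder of pentagons" wrapped into a cylinder capped at both ends — a pentangulation of the sphere built from $k$ bands — which realizes an arithmetic progression of $n$ with common difference equal to the band size; one then checks which residues and which threshold this gives and patches the remaining values with a second construction.

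The main obstacle I expect is twofold: (a) verifying $3$-connectivity of the constructed pentangulations — this is essential since Proposition~\ref{prop1}(v) requires it, and ad hoc gadget insertions can create $2$-cuts if not designed carefully; and (b) pinning down exactly the small excluded cases $n\in\{5,8,11,14,17,23\}$, i.e.\ proving non-existence of $3$-connected planar pentangulations for those orders. For (b) I would likely appeal to the fact that in a $3$-connected pentangulation every vertex has degree $\ge 3$, so $2e_S = \sum \deg \ge 3n$, combined with $e_S = \tfrac{5}{3}(n-2)$, already forces $n \ge 10$, killing $n=5$ and $n=8$; the remaining cases $n=11,14,17,23$ I would eliminate by a short case analysis on the number of pentagonal faces ($f_S = \tfrac{2}{3}(n-2) = 6,8,10,14$ respectively) or by reference to catalogues of simple $3$-connected planar graphs with only pentagonal faces. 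Once the arithmetic, the small-case exclusions, and the two (or one) explicit infinite families are in place, the theorem follows by reading off Proposition~\ref{prop1}(v) in both directions.
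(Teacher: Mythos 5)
Your necessary‑conditions half is essentially sound, and in fact the degree bound is stronger than you state: with $e_S=\tfrac{5}{3}(n-2)$ and minimum degree $3$ forced by $3$-connectivity, $2e_S\ge 3n$ gives $n\ge 20$ directly, which kills $n\in\{5,8,11,14,17\}$ in one stroke (not merely $n\ge 10$ as you write). What remains on that side is only the exclusion of $n=23$, which you defer to ``a short case analysis or a catalogue''; that is not a proof, and it is genuinely the delicate sporadic case (the figure in the paper jumps from $20$ to $26$ for exactly this reason).

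The more serious gap is in the positive direction, and it is precisely where the paper takes a different route. Your proposed local face‑replacement gadget adding $3$ vertices inside a pentagon cannot exist: keeping all faces pentagons forces a patch of $3k$ new vertices to carry exactly $5k$ new edges, and the degree count $2x+(5k-x)\ge 9k$ (where $x$ counts edges among new vertices) shows that for $k=1$ we would need $x\ge 4>\binom{3}{2}$, so some new vertex has degree at most $2$, while for $k=2$ at most two edges attach the patch to the old boundary, creating a $2$-cut. Either outcome destroys the $3$-connectivity that Proposition~\ref{prop1}(v) requires. (The paper's $+3$ Steiner‑point trick is used only for $1$-fan‑bundle graphs, where degree‑$2$ vertices are explicitly tolerated; it is unusable here.) Any single‑pentagon substitution preserving $3$-connectivity must therefore add at least $9$ vertices, and even then you only reach an arithmetic progression of step $9$ from each base graph, so you would still need further base graphs for the residues $26,32,35,\dots$ and a separate argument for $n=23$. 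The paper avoids all of this by citing Hasheminezhad et al.: there exist $5$-connected $5$-regular planar graphs with $n$ faces exactly for $n=20$ and all $n\ge 26$ with $n\equiv 2\pmod 3$; their duals are the required $3$-connected pentangulations, and filling each pentagon with a pentagram yields the optimal $2$-planar graphs by the characterization of Bekos et al. To make your version work you would have to either reprove that generation theorem or exhibit a correct family of gadgets and base graphs; as written, both the realization of all admissible $n\ge 26$ and the exclusion of $n=23$ are asserted rather than established.
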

\begin{proof}
There are 5-connected 5-regular planar graphs with $n$ faces for
$n=20$ and for every $n \geq 26$ if $n=2$ mod  3, as shown by
Hasheminezhad et al.~\cite{hmr-pentagons-11}. The dual is a
3-connected  pentangulation, which is turned into an optimal
2-planar if each pentagon is filled by a pentagram, as shown in
\cite{bkr-optimal-17}. The smallest pentangulations are shown in
Fig.~\ref{fig:pentangulation}.
\qed
\end{proof}

\begin{figure}[t]
  \centering
  \subfigure[]{
    \includegraphics[scale=0.7]{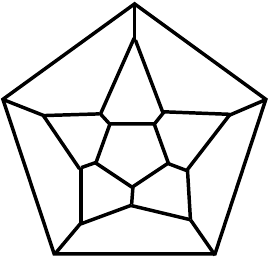}
    \label{fig:dodecahedon}
    }
    \hspace{1mm}
 \subfigure[]{
        \includegraphics[scale=0.7]{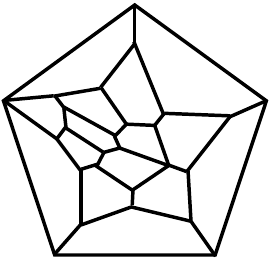}
\label{fig:penta26}
     }
        \hspace{1mm}
 \subfigure[]{
       \includegraphics[scale=0.7]{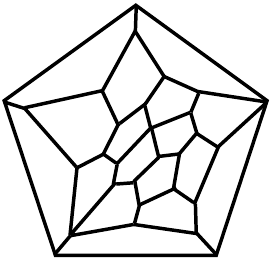}
\label{fig:penta29} }
\hspace{1mm}
 \subfigure[]{
        \includegraphics[scale=0.7]{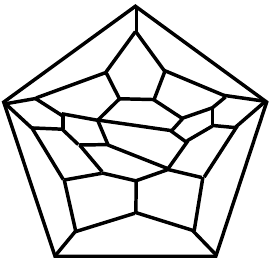}
\label{fig:penta32}
     }
 \caption{The smallest pentangulations with 20, 26, 29 and 32 vertices.
  }
  \label{fig:pentangulation}
\end{figure}

Note that there is no optimal 1-planar graph with seven or nine
vertices  and no optimal 2-planar graph with 23
vertices.\\

\begin{theorem} \label{lem:2-planar}
There are optimal 1-fan-bundle graphs exactly  for every $n \geq 8$
if $n=2$ mod  3.
\end{theorem}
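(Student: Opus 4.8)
The plan is to translate the statement into the existence of a suitable planar pentangulation and then to settle both directions separately.

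First I would record the reduction. By Proposition~\ref{prop1}(vi) a graph is optimal $1$-fan-bundle if and only if it is obtained from a simple planar pentangulation $P$ -- a $2$-connected plane graph every face of which, including the outer one, is bounded by a $5$-cycle -- by inserting into each pentagonal face four of its five face diagonals without creating a multi-edge. Hence the task is to decide for which $n$ there is such a pentangulation admitting such an augmentation. For necessity, Euler's formula applied to $P$ together with $2|E(P)| = 5f$ gives $|E(P)| = \tfrac{5}{3}(n-2)$ and $f = \tfrac{2}{3}(n-2)$, so $n \equiv 2 \pmod 3$. For $n = 2$ there is no pentangulation; for $n = 5$ the unique planar pentangulation is $C_{5}$ (a bridgeless plane graph with exactly two faces is a single cycle), and since its interior and exterior pentagon offer the same five vertex pairs as diagonals, one cannot add $4+4=8$ pairwise distinct diagonal edges, so the required $\tfrac{13}{3}(5-2)=13$ edges cannot be reached without a multi-edge. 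Thus no optimal $1$-fan-bundle graph exists for $n\in\{2,5\}$, which settles the ``only if'' part.

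For the ``if'' part I would give a base construction and a size-increasing operation. For $n = 8$: let $H$ be the planar multigraph obtained from $K_{4}$ by doubling the four edges of a spanning $4$-cycle; then $H$ is $5$-regular and has eight faces, so its planar dual $P_{8}$ is a simple pentangulation on eight vertices with four pentagonal faces, and a direct inspection shows that four diagonals can be chosen in each face without creating a multi-edge (the only vertex pairs that are diagonals of two faces induce a $4$-cycle on the four faces, which is resolved by orienting that cycle and letting, for each shared pair, one of the two faces omit it). Next, the operation that replaces a pentagonal face $v_{1}v_{2}v_{3}v_{4}v_{5}$ by the three pentagons $(v_{1},v_{2},v_{3},b,a)$, $(v_{3},v_{4},v_{5},c,b)$, $(v_{1},a,b,c,v_{5})$ -- adding new vertices $a,b,c$ and edges $v_{1}a$, $ab$, $bv_{3}$, $bc$, $cv_{5}$ -- turns a pentangulation into a pentangulation on three more vertices and, as one checks, creates no triangle. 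Applying it repeatedly to $P_{8}$ yields pentangulations for every $n \ge 8$ with $n \equiv 2 \pmod 3$; alternatively, for $n = 20$ and every $n \ge 26$ with $n \equiv 2 \pmod 3$ one may take the $3$-connected pentangulations underlying the optimal $2$-planar graphs (Proposition~\ref{prop1}(v)), for which no two faces share more than an edge, so no diagonal is ever an edge or a diagonal of a second face, leaving only the finitely many cases $n\in\{8,11,14,17,23\}$ to be verified directly -- e.g.\ by one split applied to $P_{8}$ (for $n=11,14,17$) and to the dodecahedral pentangulation (for $n=23$).

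The part I expect to be the real obstacle is the multi-edge constraint, not the existence of pentangulations. Whenever a pentagon is split, one new pentagon shares a diagonal with each of its two siblings ($(v_{1},b)$ and $(v_{5},b)$), and the all-old-vertex diagonal $v_{1}v_{3}$ of $(v_{1},v_{2},v_{3},b,a)$ may additionally coincide with a diagonal of an untouched face; since every face must drop exactly one of its five diagonals it has only one unit of slack, so one has to argue that the ``conflict graph'' on the faces (adjacency $=$ sharing a candidate diagonal) can always be oriented with in-degree at most one, each face then keeping four diagonals. Carrying out this bookkeeping, together with the routine checks that splitting preserves being a $2$-connected triangle-free plane pentangulation, is where the work concentrates.
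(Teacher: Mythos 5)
Your proposal follows essentially the same route as the paper: integrality of $\tfrac{13}{3}(n-2)$ plus a direct exclusion of $n=5$ for necessity, and for sufficiency a base pentangulation at $n=8$ together with the inductive operation that splits one pentagon into three by inserting three new vertices (two of degree two), exactly the paper's Steiner-point construction. Your base case is built differently (dual of a $5$-regular planar multigraph rather than the paper's pentangulated pentagon) and you spell out the multi-edge bookkeeping that the paper only asserts, but the argument is the same in substance and is correct.
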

\begin{proof}
Since optimal 1-fan-bundle graphs have $\frac{13}{3}(n-2)$ edges,
there are optimal graphs only for $n=3k+2$ by integrality.   The
5-clique minus one edge has only nine edges, so that it is not
optimal 1-fan-bundle.  Optimal graphs for $n = 3k+2$ and $k \geq 2$
can be obtained from a planar pentangulation and the insertion of
four edges in each pentagon. A pentangulation may have vertices of
degree two, as opposed to the previous case for 2-planar graphs.
Then the neighbors of  a degree two vertex are connected by an edge
in only one of the adjacent faces. The smallest  optimal
1-fan-bundle graph is shown in Fig.~\ref{fig:fanbundle}. By
induction, remove the crossed edges in a pentagon $P$, add three
vertices as Steiner points, partition $P$ into three pentagons, so
that there are two vertices of degree two, and insert four crossed
edges in each face, so that there is no multi-edge, see
 Fig.~\ref{fig:fanbundle}. So we obtain
optimal graphs for every $n=3k+2, k\geq 2$.
\qed
\end{proof}

A \emph{primal-dual graph} of a  3-connected planar graph $G$ is
obtained by the simultaneous drawing of $G$ and its dual $G^*$, from
which the dual vertex for the outer face has been removed.  Every
primal edge of $G$ is crossed by the dual edge between the faces on
either side. Moreover, every dual vertex for a face of $G$ is
adjacent to every primal vertex in the boundary of the face, see
Fig.~\ref{fig:opt-RAC-9}. A primal-dual graph is 1-planar.
Primal-dual graphs (including the outer face) have been used by
Ringel \cite{ringel-65} in his early study of 1-planar graphs.
Brightwell and Scheinerman \cite{bs-rpg-93} have shown that
primal-dual  graphs admit a RAC drawing, see also
\cite{dlq-circlepacking-20, fr-pdcr-19}.

\begin{figure}[t]
  \centering
  \subfigure[]{
    \includegraphics[scale=0.7]{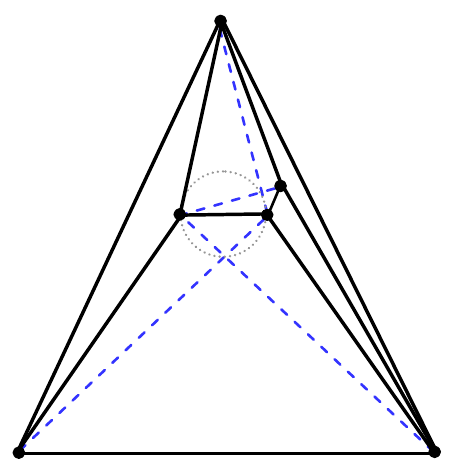}
    \label{fig:optRAC-6}
    }
    \hspace{2mm}
 \subfigure[]{
        \includegraphics[scale=0.65]{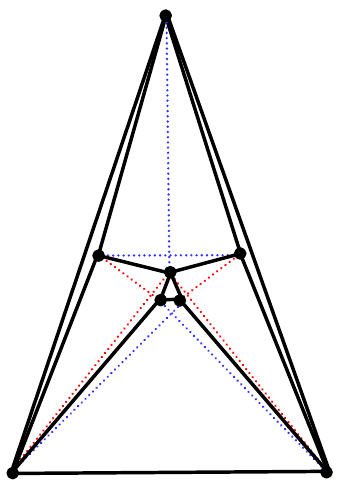}
\label{fig:optRAC-8}
     }
        \hspace{2mm}
 \subfigure[]{
        \includegraphics[scale=0.7]{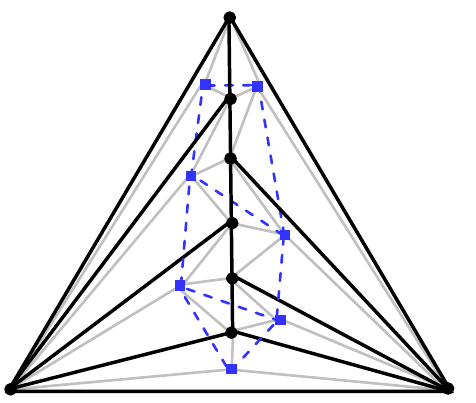}
\label{fig:opt-RAC-9} }
 \caption{RAC drawings of optimal RAC graphs with (a) six and (b)
 eight vertices, and (c) a vertex-face graph. 
  }
  \label{fig:optRAC}
\end{figure}

\begin{theorem} \label{lem:RAC-there are}
There are optimal RAC graphs  for every $n \geq 4$.
\end{theorem}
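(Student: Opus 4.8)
The plan is to obtain almost all orders from \emph{primal-dual graphs} and to cover the few remaining small orders directly. Let $G$ be a simple $3$-connected planar graph, embedded so that its outer face is a triangle, and let $D(G)$ be the primal-dual graph of $G$ as defined above, with the dual vertex of the outer face omitted. Then $D(G)$ is $1$-planar and, by Brightwell and Scheinerman~\cite{bs-rpg-93}, admits a RAC drawing whose outer face is the outer triangle of $G$, see Fig.~\ref{fig:opt-RAC-9}. It then remains only to (i) check that $D(G)$ has exactly $4n-10$ edges, so that it is an optimal RAC graph by Proposition~\ref{prop1}, and (ii) realise every $n\ge 4$ by a suitable choice of $G$ (or by a direct construction).

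For (i) I would count by Euler's formula. If $G$ has $V$ vertices, $E$ edges and $F$ faces, then $D(G)$ has $n=V+(F-1)=E+1$ vertices. Since $G$ is $3$-connected, every face boundary is a simple cycle and two distinct faces share at most one edge, so $D(G)$ is simple and consists of the $E$ primal edges of $G$; the edges joining a dual vertex $v_f$ to the vertices on the boundary of its bounded face $f$, of which there are $\sum_{f\ \mathrm{bounded}}\deg(f)=2E-3$ because the outer triangle contributes $3$; and the edges joining the dual vertices of two bounded faces sharing an edge, one for each of the $E-3$ edges not on the outer boundary. Hence $D(G)$ has $E+(2E-3)+(E-3)=4E-6=4(n-1)-6=4n-10$ edges.

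For (ii), by the previous paragraph it suffices to find, for the desired $n$, a $3$-connected planar graph $G$ with a triangular face and $n-1$ edges. A $3$-connected planar graph on $V$ vertices has between $V$ and $3V-6$ edges, and as soon as its number $E$ of edges satisfies $E>2V-4$ it has a triangular face (otherwise every face has degree at least $4$, whence $2E=\sum_f\deg(f)\ge 4(E-V+2)$, i.e.\ $E\le 2V-4$). Taking, say, a suitable near-triangulation on $\lceil(n+5)/3\rceil$ vertices one gets such a graph for every $E=n-1\ge 6$ with $E\ne 7$; this yields optimal RAC graphs for $n=7$ and for every $n\ge 9$. The remaining orders are handled directly: $K_4$ for $n=4$ (planar, with $6=4n-10$ edges), $K_5$ for $n=5$ (which is RAC, with $10=4n-10$ edges), and the explicit RAC drawings of Fig.~\ref{fig:optRAC-6} and Fig.~\ref{fig:optRAC-8} for $n=6$ and $n=8$; these two orders are genuinely exceptional, since there is no $3$-connected planar graph with $5$ or with $7$ edges.

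The main obstacle is precisely this handful of small exceptional orders. The primal-dual construction is essentially forced to produce $n=|E(G)|+1$ with a triangular outer face (Proposition~\ref{prop1}(vii)), so it misses $n=6$ and $n=8$ entirely; together with $K_5$ these require separate constructions whose RAC property — every crossing a genuine right angle in a straight-line drawing — has to be checked by hand, as recorded in Fig.~\ref{fig:optRAC}. Everything else reduces to the edge count above and to the (routine) existence of $3$-connected planar graphs with a prescribed number of edges and at least one triangular face.
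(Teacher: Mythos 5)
Your proof is correct and takes essentially the same route as the paper: primal-dual graphs of $3$-connected planar graphs with a triangular outer face (made RAC via Brightwell--Scheinerman) for $n=7$ and $n\ge 9$, together with ad hoc optimal RAC graphs for $n=4,5,6,8$. Your general Euler-formula count $4E-6=4n-10$ is in fact cleaner than the paper's edge count for its one explicit family, and the existence of a suitable $G$ with any prescribed number of edges $E\ge 6$, $E\ne 7$ (e.g.\ wheels with successively added chords of the rim) is routine, as you indicate.
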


\begin{proof}
Clearly, $K_4$ and $K_5$ are optimal RAC graphs. Optimal RAC graphs
with six and eight vertices are displayed in Fig.~\ref{fig:optRAC}.
In general, an optimal RAC graph  can be constructed from a
primal-dual graph, as observed by Didimo et al~\cite{del-dgrac-11}.
In particular, optimal RAC graphs with $n \geq 9$ can be constructed
a follows. Add $k \geq 2$ vertices $v_1,\ldots, v_k$ in the interior
of an outer triangle $\Delta(a,b,c)$ with $c$ on top. Add edges
$\edge{c}{v_k}$ and $\edge{v_i}{v_{i+1}}$ for $i=1,\ldots, k-1$, so
that there is a path. Next add edges so that the so obtained graph
is 3-connected and inner faces are triangles or quadrangles. For
example, add edges $\edge{a}{v_1}, \edge{b}{v_1}$ and
$\edge{a}{v_{2i+1}}$ and $\edge{b}{v_{2i}}$ for $1 \leq i \lfloor
(k+1)/2 \rfloor$, so that there are $k-1$ quadrangles in an
$n$-vertex graph with $n=k+3$.
 Then there are three inner triangles and $n-4$
quadrangles. The primal-dual graph has $2n-1$ vertices and
 $(n-1)+(n-4)+4(n-3)+9 = 4n-12$ edges, and it is 1-planar
\cite{ringel-65}. One more vertex and four more edges are obtained
by the insertion of a diagonal in a quadrangle. The so obtained
primal-dual graph is 3-connected and admits a RAC drawing
\cite{bs-rpg-93}, so that there is an optimal RAC graph for every $n
\geq 9$. The one for $n=7$ is obtained in the same way, see
\cite{del-dgrac-11}.
\qed
\end{proof}

\section{Minors} \label{sect:minor}

It is well-known that every graph has a subdivision that is
1-planar. In other words, every graph is a topological minor of a
1-planar graph. This fact has been improved to IC-planar graphs and
to upper bounds on the number of subdivisions  \cite{b-fcf-21}. In
particular,  every graph has a 3-subdivision that is   RAC
\cite{del-dgrac-11},  a 2-subdivision that is fan-crossing, and a
1-subdivision that is quasi-planar \cite{b-fcf-21}. We consider
minors in optimal graphs.

\begin{theorem} \label{thm:optimal}
For every graph  $G$ there is an optimal beyond-planar graph $H$ of
type $\tau$, where $\tau$ is IC-planar, NIC-planar, 1-planar,
2-planar, 1-fan-bundle and RAC, respectively, such that $G$ is a
topological minor of $H$.

\end{theorem}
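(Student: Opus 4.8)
The plan is to start from an arbitrary graph $G$ and first reduce to the case where $G$ is drawn with a subdivision that already lives inside one of the well-understood beyond-planar classes, then extend that drawing to an optimal graph of the required type without disturbing the subdivided copy of $G$. Concretely, I would proceed type by type, but the common engine is the same: take a sufficiently fine subdivision $G'$ of $G$ so that $G'$ has a $\tau$-drawing (for $\tau \in \{\text{1-planar}, \text{IC-planar}\}$ this is the known 1-subdivision / bounded-subdivision result cited just before the statement; for RAC a 3-subdivision suffices by Didimo et al.; for the others one subdivides enough to make each edge crossing-light), and realize $G'$ as a subgraph of a large planar skeleton $S$ together with the prescribed pattern of crossing edges. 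The key observation is that optimal $\tau$-graphs are built by filling the faces of a suitable planar skeleton (a 3-connected quadrangulation for 1-planar, a 3-connected pentangulation for 2-planar, etc., as recorded in Proposition~\ref{prop1}) with X-quadrangles, X-pentagons, or pentagrams. So the real task is: given any planar graph $P$ (which I take to be a planar subdivision of $G$ together with extra Steiner vertices), embed $P$ as a subgraph of a 3-connected quadrangulation (respectively pentangulation) $S$, then add the canonical crossing pattern to obtain an optimal $\tau$-graph $H$. Since $G'$ is obtained from $G$ by subdividing edges, and $H$ contains $G'$ as a subgraph, $G$ is a topological minor of $H$.

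The first real step is the planar-augmentation lemma: every planar graph $P$ is a spanning subgraph of a 3-connected planar quadrangulation (after possibly subdividing some edges of $P$ and adding new vertices and edges). I would prove this by first planarly triangulating (after subdividing to kill parallel edges and make things simple), then converting the triangulation to a quadrangulation by a local gadget — e.g.\ place a new vertex in the interior of each triangle and connect it to the three corners, turning each triangle into three quadrangles after deleting the original triangle edges; one must subdivide the original edges of $P$ first so that their subdivided copies survive as uncrossed edges in the skeleton, and one checks 3-connectivity of the result. The pentangulation version (needed for 2-planar and 1-fan-bundle) is analogous with a slightly larger gadget turning each face into pentagons, and one must also respect the $n \equiv 2 \pmod 3$ and the explicit range constraints from Theorem~\ref{lem:2-planar} and Theorem~\ref{lem:2-planar} (1-fan-bundle) — I can always pad with extra pentagons to land in the allowed range. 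For the IC-planar and NIC-planar cases I use the structural Lemmas~\ref{lem:IC} and~\ref{lem:NIC}: the skeleton is again built from triangles and X-quadrangles with the stated disjointness, and I must arrange the subdivided copy of $G$ to use only uncrossed edges, placing each such edge inside a triangular region; this forces a coarser subdivision of $G$ (enough that every edge of $G$ becomes a path long enough to route through triangle faces). For RAC I invoke Theorem~\ref{lem:RAC-there are}: a subdivided $G$ sits in a 3-connected planar graph whose primal-dual graph is optimal RAC and admits a RAC drawing by Brightwell–Scheinerman; here the 3-subdivision is exactly what is needed so that no edge of $G$ itself has to be a crossed (hence non-straight-constraint) edge.

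The second step, which I expect to be routine once the augmentation lemma is in hand, is bookkeeping: verifying that after adding the canonical crossing pattern the graph is genuinely of type $\tau$ (this is immediate from the characterizations in Proposition~\ref{prop1}, since we literally build the skeleton-plus-pattern form), that the vertex count $n$ can be pushed into the allowed range for each type by inserting extra "empty" skeleton faces far from the copy of $G$, and that the subdivided $G$ survives intact — no edge of $G'$ was deleted or crossed, so $G'$ is a subgraph of $H$ and hence $G \preceq_{\text{top}} H$.

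The main obstacle is the planar-augmentation-with-preservation step: I need that any planar graph can be completed to a 3-connected quadrangulation (or pentangulation) in which a prescribed set of edges remains uncrossed and, for IC-/NIC-planar, in which a prescribed set of edges each lies on a triangular face disjoint from all X-quadrangles. Making the gadget local, checking 3-connectivity, and simultaneously hitting the arithmetic constraints ($n$ divisible by $4$ for IC-planar, $n \equiv 2 \pmod 5$ for NIC-planar, $n \equiv 2 \pmod 3$ and $\ge 26$ or $\ge 8$ for 2-planar and 1-fan-bundle, $n = 3k-5$ for RAC) is where the care goes; the padding argument — attach a chain of extra standard faces to one outer edge, each contributing a fixed number of vertices — resolves the arithmetic but needs to be stated cleanly. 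Everything else reduces to the cited structural results.
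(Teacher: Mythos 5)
Your overall architecture (subdivide $G$, embed the planar part in a 3-connected quadrangulation or pentangulation, fill the faces with the canonical crossing pattern, pad to hit the arithmetic constraints) resembles the paper's, and the padding and bookkeeping remarks are fine. But there is a genuine gap at the heart of the argument: you never explain how the crossings of a non-planar $G$ are realized inside $H$, and where you do commit to a mechanism it is wrong. For IC-planar and NIC-planar you propose to arrange the subdivided copy of $G$ "to use only uncrossed edges". The uncrossed edges of an IC-/NIC-planar graph form its planar skeleton, which is planar; a subdivision of a non-planar $G$ is still non-planar, so it cannot be a subgraph of that skeleton. Already for $G=K_5$ your routing plan is impossible. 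The same objection undermines your planar-augmentation lemma as the main tool: it produces a planar host, and no amount of face-filling afterwards tells you which \emph{crossed} edges the copy of $G$ is supposed to traverse.

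The missing idea --- the engine of the paper's proof --- is a crossing gadget. One starts from an arbitrary drawing of $G$, planarizes it (each crossing point becomes a degree-four vertex), and replaces each crossing point by an X-quadrangle (for 1-, IC-, NIC-planar and RAC) or an X-pentagon/dodecahedron (for 2-planar and 1-fan-bundle), so that the two paths representing the two crossing edges of $G$ enter the gadget at opposite corners and use the two crossing diagonals: they cross inside the gadget but remain vertex-disjoint. The uncrossed segments of the drawing connect consecutive gadgets, and only then is the graph augmented to optimality by partitioning and filling the remaining faces. Your RAC paragraph has the analogous problem: citing the 3-subdivision RAC result does not produce an \emph{optimal} RAC host containing that subdivision, and the real difficulty there is not whether edges of $G$ get crossed (crossed edges are perfectly legal in a RAC drawing) but that two paths would collide at the shared crossing-point vertex of the primal-dual graph; the paper resolves this with an explicit bypass construction, which your proposal has no substitute for.
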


\begin{proof}
Consider a  drawing of $G$ and treat it as a planar graph, so that
every crossing point of two edges of $G$ is a new vertex of degree
four. We construct a host graph $H$ by placing a ``gadget'' at every
crossing point, so that the crossing happens in the gadget.
Thereafter,  the intermediate graph is augmented for optimality. The
gadget is an X-quadrangle if the type is 1-planar, IC-planar,
NIC-planar, and RAC, respectively, and an X-pentagon or the
dodecahedron graph with crossing edges in each inner face for
2-planar and 1-fan-bundle graphs.

First, for IC-planar graphs, we create an X-quadrangle for every
vertex $v$ of $G$ so that $v$ is one vertex of the X-quadrangle,
and, as aforesaid, an X-quadrangle at every crossing. Every edge of
$G$ is partitioned into segments, which are uncrossed pieces in the
drawing. These segments are inherited by $H$, so that a segment
connects  vertices in two X-quadrangles. There are no further
vertices in $H$, so that every vertex is in an X-quadrangle. Two
 X-quadrangles are vertex disjoint, which guaranties IC-planarity.
 By Lemma~\ref{lem:IC}, optimality is obtained by a triangulation.
  Clearly, for every edge of $G$ there is a path in $H$,
so that two such paths  are vertex disjoint.

Similarly, for  NIC-planar graphs, we replace each crossing  by an
X-quadrangle, so that the X-quadrangles for two consecutive
crossings along an edge share a vertex. In other words, segments are
contracted. The first (last) segment of an edge incident to a vertex
of $G$ is replaced by an X-configuration with the segment as a
diagonal. Thereby, every edge of $G$ is subdivided, so that two
paths for edges of $G$ are vertex disjoint in $H$. It remains to
construct an optimal NIC-planar graph by filling the remaining
faces. In addition, we wish to keep the degree low. So far, the
boundary of each face consists of edges from X-quadrangles. If there
is a triangle, we are done. The gadget from Fig.~\ref{fig:NIC} is
inserted in a quadrangle. Larger faces are partitioned using
X-quadrangles and triangles and the gadget for quadrangles, so that
each edge of $H$ is in an X-quadrangle and in a triangle. Then $H$
is an optimal NIC-planar graph by Lemma~\ref{lem:NIC}.

For 1-planarity,  the size of the  faces of the drawing of $G$ is
even  if the crossing points are replaced by X-quadrangles and the
first (last) segment between a vertex and a crossing point is
subdivided. Then the faces  can be partitioned into quadrangles,
which are  augmented to X-quadrangles, see \cite{s-K7minors-17}. The
planar skeleton is 3-connected \cite{abk-sld3c-13}, so that $H$ is
optimal 1-planar.

For 2-planarity, the drawing of $G$ must   be transformed into a
3-connected pentangulation. Therefore, we first replace every
crossing point  by the dodecahedron graph, so that the segment of an
edge between two crossing points is attached to two outer vertices
on opposite sides of the outer pentagon, or two vertices of adjacent
pentagons coincide, as before in the NIC-planar case. If every inner
pentagram is filled by a pentagram, then two crossing edges can be
routed internally, so that their
subdivisions are vertex disjoint. 
Hence, there is a subdivision of $G$. Towards optimality, large
faces of size at least six are partitioned into pentagrams,
triangles and quadrangles, so that there are no vertices of degree
two and the intermediate graph is 3-connected. If there is a
quadrangle, then insert the dodecahedron graph in its interior and
partition the region in between into
three more pentagons. 
Similarly, the region between a triangle and an inserted
dodecahedron graph is partitioned into two pentagons and a
quadrangle, which is partitioned into pentagons as described before.
Finally, all pentagons are filled by pentagrams, so that there are
only X-pentagons. This does not create a multi-edge, since every
vertex of the planar skeleton  has degree at least three, and there
is no separation pair, so that the pentangulation  is 3-connected.
Hence, there is an optimal 2-planar graph that contains a
subdivision of $G$ as a subgraph.

For 1-fan-bundle graphs, we proceed as before, but finally fill the
pentagrams by four edges, so that an edge crossing in $G$ is
transferred to an edge crossing inside a pentagon.

For RAC graphs,  suppose that $G$ is drawn with  straight-line
segments. Treat the drawing as a planar graph and triangulate it.
Then subdivide each edge, place a new vertex in each face (except
for the outer face) and connect the inserted vertex with the six
vertices in the boundary of the face, so that there is a
re-triangulation. Now construct the primal-dual graph $H$, which is
  1-planar graph and an optimal RAC graph, since it admits a RAC
  drawing as shown by
Brightwell and Scheinerman~\cite{bs-rpg-93}. It remains to consider
paths for edges of $G$. Every edge of $G$ has a subdivision in $H$
which uses the segments from the drawing. Then two paths meet in the
crossing point for their  edges. This collision is circumvented
using a bypass. Suppose that edges $e, f$ and $g$ cross each other
so that there is a triangle in the drawing of $G$ with crossing
points $u, v$ and $w$,   see Fig.~\ref{fig:bypass}. All other cases
are similar. Then $e$ can bypass $u$ and pass through $w$, $f$ can
pass through $u$ and bypass $v$, and $h$ can bypass $w$ and pass
through $v$. For edge $e$, the bypath begins at the subdivision
point just before $u$, it goes through the (vertices for the) faces
next to $u$, and ends at the subdivision point on $e$ after $u$.
Thereby it crosses the path for edge $g$. Hence, there are vertex
disjoint paths in the vertex face graph $H$ for the edges of $G$, so
that a subdivision of $G$ is a subgraph of $H$, that is $G$ is a
topological minor of $H$.
\qed
\end{proof}

\begin{figure}[t]
  \centering
\subfigure[]{
        \includegraphics[scale=0.7]{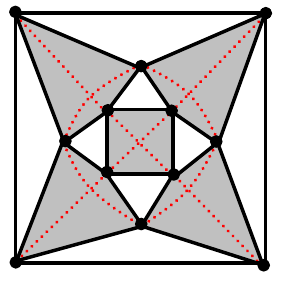}
\label{fig:NIC}
     }
     \hspace{1mm}
 \subfigure[]{
        \includegraphics[scale=0.7]{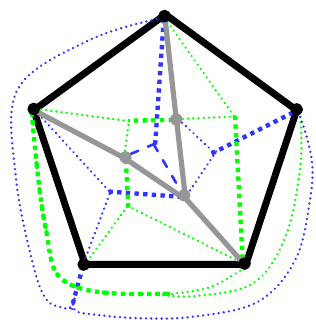}
\label{fig:fanbundle}
     }
    \hspace{1mm}
  \subfigure[]{
        \includegraphics[scale=0.7]{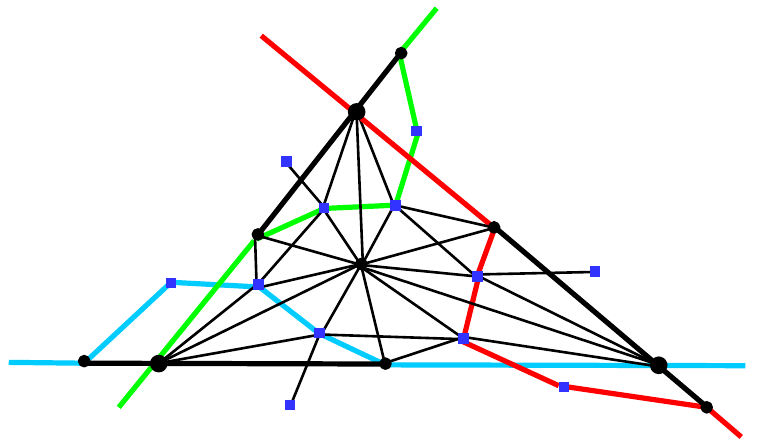}
\label{fig:bypass}
      }
 \caption{Illustration to the proof of Theorem~\ref{thm:optimal}.
(a)   A NIC-planar graph in a quadrangle. (b) The
 smallest optimal 1-fan-bundle graph with three Steiner points (gray
 in the electronic version) for a pentangulation of a pentagon.
(c) A  bypass for three pairwise
 crossing edges using a primal-dual graph and auxiliary vertices.
 In an electronic version, the edges (paths) are colored red, blue and green.
  }
   \label{fig:thm1}
\end{figure}

\begin{corollary} \label{cor:optimal-minor}
Every graph  is a topological minor of an optimal graph for each of
the following types: fan-crossing free, grid-crossing, fan-crossing,
fan-planar, 1-gap-planar, 4-map, 5-map.
\end{corollary}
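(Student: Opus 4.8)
The plan is to recycle the optimal host graphs $H$ constructed in the proof of Theorem~\ref{thm:optimal} and to observe that, for those particular graphs, the extra drawing restrictions imposed by the types listed in the Corollary come for free, while the number of edges already equals the corresponding density. So no new host graph needs to be built.

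First I would dispose of the types of density $5n-10$, namely fan-crossing, fan-planar, 1-gap-planar, and 5-map. Here it suffices to take, for a given graph $G$, the optimal 2-planar graph $H$ that Theorem~\ref{thm:optimal} produces: by Proposition~\ref{prop1}(v) every optimal 2-planar graph is simultaneously an optimal fan-crossing, fan-planar, 1-gap-planar, and 5-map graph. Thus the single graph $H$, which contains a subdivision of $G$ as a subgraph, is optimal for all four of these types at once, and nothing further has to be verified.

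It remains to treat the types of density $4n-8$: fan-crossing free, grid-crossing, and $4$-map. For these I would take the optimal $1$-planar graph $H$ of Theorem~\ref{thm:optimal}. By construction $H$ is assembled from X-quadrangles: its planar skeleton is a $3$-connected quadrangulation and a pair of crossing edges is inserted in each quadrangle. In an X-quadrangle the two crossing edges are the diagonals of a $4$-cycle, hence independent, so every crossing of $H$ is a crossing of two independent edges, which is exactly the fan-crossing free condition; since $H$ has $4n-8$ edges it is optimal fan-crossing free. The same local structure at every crossing — a single X-quadrangle with a triangulated exterior — contains none of the grids that define grid-crossing graphs, so $H$ is also an optimal grid-crossing graph. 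Finally, turning each vertex of $H$ into a region, letting the four regions of an X-quadrangle meet at its crossing point and adjacent regions share the common uncrossed edge, exhibits $H$ as a $4$-map graph, again optimal because it has $4n-8$ edges.

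The point that needs care is that one may \emph{not} argue ``optimal $1$-planar equals optimal fan-crossing free (resp.\ $4$-map, grid-crossing)'': not every optimal $1$-planar embedding is fan-crossing free, only those made of X-quadrangles are, so it is essential that the $H$ of Theorem~\ref{thm:optimal} already has this special form (which it does). I expect the main, and very minor, obstacle to be writing out the grid-crossing case, i.e.\ checking that the picture around each crossing of $H$ avoids the forbidden grid configurations; the fan-crossing free and $4$-map cases are immediate from the X-quadrangle description, and the $5n-10$ cases are an immediate corollary of Proposition~\ref{prop1}(v).
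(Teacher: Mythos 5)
Your proposal is correct and follows essentially the same route as the paper: reuse the optimal 1-planar and optimal 2-planar host graphs from Theorem~\ref{thm:optimal}, dispatch the density-$5n-10$ types via the fact that optimal 2-planar graphs are simultaneously optimal fan-crossing, fan-planar, 1-gap-planar and 5-map, and observe that the X-quadrangle structure of the 1-planar host makes it optimal for the density-$4n-8$ types. The only difference is cosmetic: where you verify the fan-crossing-free, grid-crossing and 4-map properties directly from the X-quadrangle structure, the paper simply cites the known coincidences of these optimal classes.
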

\begin{proof}
We can use the constructions for 1-planar and 2-planar graphs from
(the proof of) Theorem~\ref{thm:topological-minor}, since optimal
fan-crossing free (grid crossing) graphs are optimal 1-planar
\cite{cpkk-fan-15}, and every 4-map graph is a 3-connected
triangulated 1-planar graph. Similarly, the crossed dodecahedron
graph is   a 5-map graph \cite{b-5maps-19} and simultaneously
2-planar and fan-crossing, so that the construction for 2-planar
graphs can be used for all types of
 graphs with a density of $5n-10$.
 \qed
\end{proof}

Finally, we   distinguish topological minors from  minors in optimal
beyond-planar graphs. Similar to the case of $K_5$ and graphs of
degree at most three, we wish to keep the degree of graphs low if
they contain a (large) clique as a minor. The degree is determined
by the gadgets for edge crossings and the filling of faces towards
optimality. The degree of the minor can be decreased to three.

A 3-\emph{regularization} transforms a graph  into a graph of degree
three by a local operation on vertices. Examples are the
\emph{node-to-circle expansion}, which expands every vertex   of
degree $d$ into a circle of $d$ vertices of degree three
\cite{b-fcf-21}, or the expansion into a binary tree with $d$
leaves. Each vertex on the circle (leaf) inherits one incident edge.
3-regularization preserves minors, so that $G$ is a minor of a
3-regularization $\eta(H)$  if $G$ is a minor of $H$. However,
3-regularization  does not preserve topological minors, since the
obtained graphs have degree at most three, and therefore exclude any
graph with a vertex of degree at least four as a topological minor.
In consequence, $K_5$ is not a topological minor of any
3-regularization.

\begin{theorem} \label{thm:topological-minor}
For every type $\tau$ of beyond-planar graphs as above,  there is a
constant $d_{\tau}$ and an optimal $\tau$-graph $H$ so that the
complete graph $K_k$ with $k=d_{\tau}$ is a minor but not a
topological minor of $H$. In the above cases, we have $d_{\tau} \leq
42$.
\end{theorem}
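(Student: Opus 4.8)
The plan is to produce, for each type $\tau$, an optimal $\tau$-graph $H$ that is \emph{locally sparse}, i.e.\ whose maximum degree is bounded by an absolute constant. The key observation is that a subdivision of $K_k$ sitting inside a host graph needs $k$ branch vertices, each incident to the $k-1$ internally disjoint paths corresponding to the edges of $K_k$ at that vertex, hence each of degree at least $k-1$ in the host. Consequently, if $\Delta(H)\le k-2$, then $K_k$ cannot be a topological minor of $H$, no matter what the rest of $H$ looks like. So it suffices to exhibit, for each $\tau$, an optimal $\tau$-graph $H$ that has $K_{d_\tau}$ as a minor and satisfies $\Delta(H)\le d_\tau-2$, and then to check $d_\tau\le 42$ in every case.

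For the minor, start from $G:=\eta(K_{d_\tau})$, the node-to-circle expansion of $K_{d_\tau}$: it has maximum degree $3$, and contracting each of the $d_\tau$ circles back to a single vertex recovers $K_{d_\tau}$, so $K_{d_\tau}$ is a minor of $G$. Applying the construction of Theorem~\ref{thm:optimal} to $G$ yields an optimal $\tau$-graph $H$ that contains a subdivision of $G$ as a subgraph; since a topological minor is a minor, $K_{d_\tau}$ is then a minor of $H$ as well. The reason for passing to $\eta(K_{d_\tau})$ rather than to $K_{d_\tau}$ directly is that $\Delta(G)=3$ is independent of $d_\tau$, so the degree blow-up incurred by the construction will be an absolute constant, not a function of $d_\tau$.

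What remains is to run the construction of Theorem~\ref{thm:optimal} while keeping $\Delta(H)$ bounded. Two sources of degree growth must be controlled. First, the gadgets: every vertex of $G$ and every (locally $4$-valent) crossing of the fixed drawing of $G$ is incorporated into an $X$-quadrangle (for IC-, NIC-, $1$-planar and RAC) or into a crossed dodecahedron (for $2$-planar and $1$-fan-bundle); as these gadgets have bounded size, they add only $O(1)$ incident edges at each vertex. Second, the augmentation towards optimality, where the faces between the gadgets are partitioned and filled so that $H$ becomes genuinely optimal — a $3$-connected quadrangulation, resp.\ pentangulation, resp.\ triangulated $1$-planar graph all of whose faces are $X$-quadrangles, resp.\ $X$-pentagons, in accordance with Proposition~\ref{prop1} and Lemmas~\ref{lem:IC}--\ref{lem:NIC}. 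A naive fan- or zigzag-partition of a face of size $f$ would create a vertex of degree $\Theta(f)$; instead each large face is split by a \emph{balanced recursive partition} into quadrangles (resp.\ pentagons), adding Steiner points so that at every level the current boundary is frozen with only $O(1)$ extra incident edges while the remaining face shrinks by a constant factor. This keeps the degree increase at every vertex — in particular at the vertices that carry the subdivision of $G$ — bounded by an absolute constant $c_\tau$ independent of $d_\tau$, and introduces no new crossing.

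Setting $d_\tau:=c_\tau+2$ then gives $\Delta(H)\le d_\tau-2$, so $K_{d_\tau}$ is a minor but not a topological minor of $H$. A routine bookkeeping of the increments — $O(1)$ per gadget plus $O(1)$ per level of the balanced partitions, the largest contribution coming from the $20$-vertex dodecahedron gadget used in the $2$-planar and $1$-fan-bundle constructions — bounds $c_\tau$ by $40$ in all six cases, hence $d_\tau\le 42$. The main obstacle is exactly this degree control during the optimality augmentation: one must make sure that a bounded-degree filling still produces an \emph{optimal} $\tau$-graph (the $3$-connectivity of the skeleton and the ``all faces are $X$-quadrangles / $X$-pentagons'' conditions cannot be relaxed), and that the Steiner points and chords introduced by the recursive split neither destroy planarity of the skeleton nor inflate the degree of the gadget vertices beyond the constant budget.
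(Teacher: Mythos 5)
Your proposal is correct and follows essentially the same route as the paper: take the node-to-circle expansion $\eta(K_{d_\tau})$, embed it in an optimal $\tau$-graph via the construction of Theorem~\ref{thm:optimal} while keeping the maximum degree bounded by an absolute constant, and conclude that $K_{d_\tau}$ is a minor but (since a branch vertex of a $K_{d_\tau}$-subdivision needs degree $d_\tau-1$) not a topological minor. The only discrepancy is in the bookkeeping: the paper's worst case (degree $40$, whence $d_\tau\le 42$) comes from the RAC construction, not from the dodecahedron gadget of the $2$-planar and $1$-fan-bundle cases, which the paper bounds at degree $12$.
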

\begin{proof}
For every type $\tau$, there is an optimal  $\tau$-graph $H$  so
that the 3-regularization of   $K_{d}$ is a (topological) minor of
$H$ by Theorem~\ref{thm:optimal}.
Recall the construction of $H$ from the proof of
Theorem~\ref{thm:optimal} and try to keep the degree low.

For IC-planar graphs, there is a triangulation of faces that
increases the degree of each vertex by at most   two, and there are
at most two faces for each vertex, since there are gadgets for the
crossing points. Hence, $H$ has degree at  most   eight,  so that
there are optimal IC-planar graphs that contain $\eta(K_{10})$ as a
minor, but not as a topological minor.

Similarly, the degree of $H$ can be kept as low as 24 for NIC-planar
graphs, and  14 for 1-planar graphs, since X-quadrangles and gadgets
are used for a partition of large faces.

For RAC graphs, we first subdivide each edge of the planar graph
from the drawing and create triangles with two subdivision points
and a vertex or a crossing point from the drawing, so that each
crossing point is surrounded by four triangles. Then  we triangulate
the remaining faces, so that the degree of each subdivision point is
increased at most by two from the face on either side.  The
intermediate graph is a triangulated planar graph of  degree at most
ten. In the next step we use a primal-dual graph for each triangle,
so that   graph $H$ has degree at most 40.

For 2-planar and 1-fan-bundle graphs, there is a pentagon around
each vertex and each crossing point in  the drawing of $G$. Faces
can be  filled, so that at most most four dodecahedron graphs meet
in a point. Thereby, graph $H$ has degree at most 12.

In any case, graph $H$ does contain a topological minor of degree at
least $d+1$, so that $K_{d+2}$ is a  minor but not a topological
minor of $H$.
\qed
\end{proof}

\section{Conclusion} \label{sect:conclusion}

In this work, we  study optimal graphs for some important types of
beyond-planar graphs. We   compute  the range for such graphs and
 show  that optimal graphs contain any graph as a (topological)
minor.

Our results can be extended to 3-planar and 1-bend RAC graphs if
such optimal graphs have $5.5n-15$ edges and consist of hexagons
with all but one diagonal from $K_6$ in the interior. However, new
techniques are needed for graphs with a density beyond $6n$, since
there is no longer a partition into a planar skeleton and edges
crossing in the interior of small faces.

Open problems include the characterization of optimal fan-crossing
and optimal 1-gap planar graphs, and  the recognition problem for
optimal RAC graphs.

\section{Acknowledgements} \label{sect:ack}
I wish to thank David Eppstein for drawing my attention to circle
packing and meshes, Gunnar Brinkmann for a reference to
\cite{hmr-pentagons-11} and providing the pentangulation with 26
vertices, and Brendan KcKay for
providing the (dual of the) pentangulation with 29 vertices.\\


\bibliographystyle{abbrv}
\bibliography{brandybibV8a}

\begin{thebibliography}{10}

\bibitem{a-cn-19}
E.~Ackerman.
\newblock On topological graphs with at most four crossings per edge.
\newblock {\em Comput. Geom.}, 85, 2019.

\bibitem{afps-grids-14}
E.~Ackerman, J.~Fox, J.~Pach, and A.~Suk.
\newblock On grids in topological graphs.
\newblock {\em Comput. Geom.}, 47(7):710--723, 2014.

\bibitem{at-mneqpg-07}
E.~Ackerman and G.~Tardos.
\newblock On the maximum number of edges in quasi-planar graphs.
\newblock {\em J. Comb. Theory, Ser. {A}}, 114(3):563--571, 2007.

\bibitem{aapps-qpg-97}
P.~K. Agarwal, B.~Aronov, J.~Pach, R.~Pollack, and M.~Sharir.
\newblock Quasi-planar graphs have a linear number of edges.
\newblock {\em Combinatorica}, 17(1):1--9, 1997.

\bibitem{abk-sld3c-13}
M.~J. Alam, F.~J. Brandenburg, and S.~G. Kobourov.
\newblock Straight-line drawings of 3-connected 1-planar graphs.
\newblock In S.~Wismath and A.~Wolff, editors, {\em Proc. {GD} 2013}, volume
  8242 of {\em {LNCS}}, pages 83--94. Springer, 2013.

\bibitem{a-cnircn-08}
M.~Albertson.
\newblock Chromatic number, independence ratio, and crossing number.
\newblock {\em Ars Math. Contemp.}, 1(1):1--6, 2008.

\bibitem{abkks-fanbundle-18}
P.~Angelini, M.~A. Bekos, M.~Kaufmann, P.~Kindermann, and T.~Schneck.
\newblock 1-fan-bundle-planar drawings of graphs.
\newblock {\em Theor. Comput. Sci.}, 723:23--50, 2018.

\bibitem{abks-beyond-Kn-19}
P.~Angelini, M.~A. Bekos, M.~Kaufmann, and T.~Schneck.
\newblock Efficient generation of different topological representations of
  graphs beyond-planarity.
\newblock In D.~Archambault and C.~D. T{\'{o}}th, editors, {\em Proc. {GD}
  2019}, volume 11904 of {\em Lecture Notes in Computer Science}, pages
  253--267. Springer, 2019.

\bibitem{bbhnr-NIC-17}
C.~Bachmaier, F.~J. Brandenburg, K.~Hanauer, D.~Neuwirth, and J.~Reislhuber.
\newblock {NIC}-planar graphs.
\newblock {\em Discrete Appl. Math.}, 232:23--40, 2017.

\bibitem{bbc-1gap-18}
S.~W. Bae, J.~Baffier, J.~Chun, P.~Eades, K.~Eickmeyer, L.~Grilli, S.~Hong,
  M.~Korman, F.~Montecchiani, I.~Rutter, and C.~D. T{\'{o}}th.
\newblock Gap-planar graphs.
\newblock {\em Theor. Comput. Sci.}, 745:36--52, 2018.

\bibitem{bkr-optimal-17}
M.~A. Bekos, M.~Kaufmann, and C.~N. Raftopoulou.
\newblock On optimal 2- and 3-planar graphs.
\newblock In B.~Aronov and M.~J. Katz, editors, {\em SoCG 2017}, volume~77 of
  {\em LIPIcs}, pages 16:1--16:16. Schloss Dagstuhl - Leibniz-Zentrum f{\"{u}}r
  Informatik, 2017.

\bibitem{bsw-bs-83}
R.~Bodendiek, H.~Schumacher, and K.~Wagner.
\newblock {B}emerkungen zu einem {S}echsfarbenproblem von {G}. {R}ingel.
\newblock {\em Abh. aus dem Math. Seminar der Univ. Hamburg}, 53:41--52, 1983.

\bibitem{bsw-1og-84}
R.~Bodendiek, H.~Schumacher, and K.~Wagner.
\newblock {\"U}ber 1-optimale {G}raphen.
\newblock {\em Mathematische Nachrichten}, 117:323--339, 1984.

\bibitem{b-FOL-18}
F.~J. Brandenburg.
\newblock A first order logic definition of beyond-planar graphs.
\newblock {\em J. Graph Algorithms Appl.}, 22(1):51--66, 2018.

\bibitem{b-fan+fcf-18}
F.~J. Brandenburg.
\newblock On fan-crossing and fan-crossing free graphs.
\newblock {\em Inf. Process. Lett.}, 138:67--71, 2018.

\bibitem{b-IC+NIC-18}
F.~J. Brandenburg.
\newblock Recognizing {IC}-planar and {NIC}-planar graphs.
\newblock {\em J. Graph Algorithms Appl.}, 22(2):239--271, 2018.

\bibitem{b-ro1plt-18}
F.~J. Brandenburg.
\newblock Recognizing optimal 1-planar graphs in linear time.
\newblock {\em Algorithmica}, 80(1):1--28, 2018.

\bibitem{b-5maps-19}
F.~J. Brandenburg.
\newblock Characterizing 5-map graphs by 2-fan-crossing graphs.
\newblock {\em Discret. Appl. Math.}, 268:10--20, 2019.

\bibitem{b-4mapGraphs-19}
F.~J. Brandenburg.
\newblock Characterizing and recognizing 4-map graphs.
\newblock {\em Algorithmica}, 81(5):1818--1843, 2019.

\bibitem{b-fan-20}
F.~J. Brandenburg.
\newblock On fan-crossing graphs.
\newblock {\em Theor. Comput. Sci.}, 841:39--49, 2020.

\bibitem{b-fcf-21}
F.~J. Brandenburg.
\newblock Fan-crossing free graphs and their relationship to other classes of
  beyond-planar graphs.
\newblock {\em Theor. Comput. Sci.}, 867:85--100, 2021.

\bibitem{bdeklm-IC-16}
F.~J. Brandenburg, W.~Didimo, W.~S. Evans, P.~Kindermann, G.~Liotta, and
  F.~Montecchianti.
\newblock Recognizing and drawing {IC}-planar graphs.
\newblock {\em Theor. Comput. Sci.}, 636:1--16, 2016.

\bibitem{begghr-odm1p-13}
F.~J. Brandenburg, D.~Eppstein, A.~Glei{\ss}ner, M.~T. Goodrich, K.~Hanauer,
  and J.~Reislhuber.
\newblock On the density of maximal 1-planar graphs.
\newblock In M.~van Kreveld and B.~Speckmann, editors, {\em Proc. {GD} 2012},
  volume 7704 of {\em {LNCS}}, pages 327--338. Springer, 2013.

\bibitem{bs-rpg-93}
G.~R. Brightwell and E.~R. Scheinerman.
\newblock Representations of planar graphs.
\newblock {\em {SIAM} J. Discret. Math.}, 6(2):214--229, 1993.

\bibitem{bggmtw-gsqs-05}
G.~Brinkmann, S.~Greenberg, C.~Greenhill, B.~D. McKay, R.~Thomas, and
  P.~Wollan.
\newblock Generation of simple quadrangulations of the sphere.
\newblock {\em Discrete Math.}, 305:33--54, 2005.

\bibitem{cpkk-fan-15}
O.~Cheong, S.~Har{-}Peled, H.~Kim, and H.~Kim.
\newblock On the number of edges of fan-crossing free graphs.
\newblock {\em Algorithmica}, 73(4):673--695, 2015.

\bibitem{cs-tc1pg-14}
J.~Czap and P.~{\v S}ugarek.
\newblock Three classes of 1-planar graphs.
\newblock Technical Report arXiv:1404.1222 [math.CO], Computing Research
  Repository ({CoRR}), 2014.

\bibitem{del-dgrac-11}
W.~Didimo, P.~Eades, and G.~Liotta.
\newblock Drawing graphs with right angle crossings.
\newblock {\em Theor. Comput. Sci.}, 412(39):5156--5166, 2011.

\bibitem{dlm-survey-beyond-19}
W.~Didimo, G.~Liotta, and F.~Montecchiani.
\newblock A survey on graph drawing beyond planarity.
\newblock {\em {ACM} Comput. Surv.}, 52(1):4:1--4:37, 2019.

\bibitem{d-gt-00}
R.~Diestel.
\newblock {\em Graph Theory (5th Edition)}.
\newblock Springer, 2017.

\bibitem{dlq-circlepacking-20}
S.~Dong, Y.~T. Lee, and K.~Quanrud.
\newblock Computing circle packing representations of planar graphs.
\newblock In S.~Chawla, editor, {\em Proceedings {SODA}}, pages 2860--2875.
  {SIAM}, 2020.

\bibitem{el-racg1p-13}
P.~Eades and G.~Liotta.
\newblock Right angle crossing graphs and 1-planarity.
\newblock {\em Discrete Applied Mathematics}, 161(7-8):961--969, 2013.

\bibitem{fr-pdcr-19}
S.~Felsner and G.~Rote.
\newblock On primal-dual circle representations.
\newblock In J.~T. Fineman and M.~Mitzenmacher, editors, {\em Symposium on
  Simplicity in Algorithms, {SOSA}}, volume~69 of {\em {OASICS}}, pages
  8:1--8:18. Schloss Dagstuhl - Leibniz-Zentrum f{\"{u}}r Informatik, 2019.

\bibitem{fkr-o2p-21}
H.~F{\"{o}}rster, M.~Kaufmann, and C.~N. Raftopoulou.
\newblock Recognizing and embedding simple optimal 2-planar graphs.
\newblock {\em CoRR}, abs/2108.00665, 2021.

\bibitem{hmr-pentagons-11}
M.~Hasheminezhad, B.~D. McKay, and T.~Reeves.
\newblock Recursive generation of simple planar 5-regular graphs and
  pentangulations.
\newblock {\em J. Graph Algorithms Appl.}, 15(3):417--436, 2011.

\bibitem{ht-beyond-book-20}
S.~Hong and T.~Tokuyama, editors.
\newblock {\em Beyond Planar Graphs, Communications of {NII} Shonan Meetings}.
\newblock Springer, 2020.

\bibitem{hsv-rstg-99}
J.~P. Hutchinson, T.~Shermer, and A.~Vince.
\newblock On representations of some thickness-two graphs.
\newblock {\em Computational Geometry}, 13:161--171, 1999.

\bibitem{ku-dfang-14}
M.~Kaufmann and T.~Ueckerdt.
\newblock The density of fan-planar graphs.
\newblock Technical Report arXiv:1403.6184 [cs.DM], Computing Research
  Repository ({CoRR}), March 2014.

\bibitem{klm-bib-17}
S.~G. Kobourov, G.~Liotta, and F.~Montecchiani.
\newblock An annotated bibliography on 1-planarity.
\newblock {\em Computer Science Review}, 25:49--67, 2017.

\bibitem{K-cgdp-30}
K.~Kuratowski.
\newblock Sur le probl$\acute{e}$me des courbes gauches en topologie.
\newblock {\em Fund. Math.}, 15:271--283, 1930.

\bibitem{ppst-tgnlg-05}
J.~Pach, R.~Pinchasi, M.~Sharir, and G.~T{\'{o}}th.
\newblock Topological graphs with no large grids.
\newblock {\em Graphs and Combinatorics}, 21(3):355--364, 2005.

\bibitem{pt-gdfce-97}
J.~Pach and G.~T{\'o}th.
\newblock Graphs drawn with a few crossings per edge.
\newblock {\em Combinatorica}, 17:427--439, 1997.

\bibitem{ringel-65}
G.~Ringel.
\newblock Ein {S}echsfarbenproblem auf der {K}ugel.
\newblock {\em Abh. aus dem Math. Seminar der Univ. Hamburg}, 29:107--117,
  1965.

\bibitem{s-s1pg-86}
H.~Schumacher.
\newblock Zur {S}truktur 1-planarer {G}raphen.
\newblock {\em Mathematische Nachrichten}, 125:291--300, 1986.

\bibitem{s-o1pgts-10}
Y.~Suzuki.
\newblock Optimal 1-planar graphs which triangulate other surfaces.
\newblock {\em Discrete Math.}, 310(1):6--11, 2010.

\bibitem{s-rm1pg-10}
Y.~Suzuki.
\newblock Re-embeddings of maximum 1-planar graphs.
\newblock {\em {SIAM} J. Discr. Math.}, 24(4):1527--1540, 2010.

\bibitem{s-K7minors-17}
Y.~Suzuki.
\newblock K\({}_{\mbox{7}}\)-minors in optimal 1-planar graphs.
\newblock {\em Discret. Math.}, 340(6):1227--1234, 2017.

\bibitem{w-minor-37}
K.~Wagner.
\newblock {\"U}ber eine {E}igenschaft der ebenen {K}omplexe.
\newblock {\em Math. {A}nn.}, 114:570--590, 1937.

\bibitem{z-dcmgprc-14}
X.~Zhang.
\newblock Drawing complete multipartite graphs on the plane with restrictions
  on crossings.
\newblock {\em Acta Math. Sinica, English Series}, 30(12):2045--2053, 2014.

\bibitem{zl-spgic-13}
X.~Zhang and G.~Liu.
\newblock The structure of plane graphs with independent crossings and its
  application to coloring problems.
\newblock {\em Central Europ. J. Math}, 11(2):308--321, 2013.

\end{thebibliography}

\end{document}